\newtheorem{theorem}{Theorem}
\newtheorem{corollary}[theorem]{Corollary}
\newtheorem{example}[theorem]{Example}
\newtheorem{proposition}[theorem]{Proposition}
\newtheorem{remark}[theorem]{Remark}
\newenvironment{proof}[1][Proof]{\noindent\textbf{#1.} }{\ \rule{0.5em}{0.5em}}
\numberwithin{theorem}{section}
\numberwithin{equation}{section}
\begin{document}

\title{Jet Theoretical Yang-Mills Energy in the Geometric Dynamics of
2D-Monolayer}
\author{M. Neagu$^{1}$, N.G. Krylova$^{2}$, H.V. Grushevskaya$^{3}$}
\date{}
\maketitle

\begin{abstract}
Langmuir-Blodgett films (LB-films) consist from few LB-monolayers which are
high structured nanomaterials that are very promising materials for
applications. We use a geometrical approach to describe structurization into
LB-monolayers. Consequently, we develop on the $1$-jet space $%
J^{1}([0,\infty ),\mathbb{R}^{2})$ the single-time Lagrange geometry (in the
sense of distinguished (d-) connection, d-torsions and an abstract
anisotropic electromagnetic-like d-field) for the Lagrangian governing the
2D-motion of a particle of monolayer. One assumed that an expansion near
singular points for the constructed geometrical Lagrangian theory describe
phase transitions to LB-monolayer. Trajectories of particles in a field of
the electrocapillarity forces of monolayer have been calculated in a
resonant approximation utilizing some Jacobi equations. A jet geometrical
Yang-Mills energy is introduced and some computer graphic simulations are
exposed.
\end{abstract}


\begin{center}
$^{1}$ Department of Mathematics and Informatics, University Transilvania of
Bra\c{s}ov, 50 Iuliu Maniu Blvd., 500091 Bra\c{s}ov, ROMANIA, E-mail:
mircea.neagu@unitbv.ro\\[0pt]
$^{2}$ Physics Faculty, Belarusian State University, 4 Nezavisimosti Ave.,
220030 Minsk, BELARUS, E-mail: nina-kr@tut.by \\[0pt]
$^{3}$ Physics Faculty, Belarusian State University, 4 Nezavisimosti Ave.,
220030 Minsk, BELARUS, E-mail: grushevskaja@bsu.by
\end{center}

\textbf{Mathematics Subject Classification (2010):} 53C60, 53C80, 81T13.

\textbf{Key words and phrases:} 2D-monolayer Lagrangian, cannonical
nonlinear connection, Cartan linear connection, d-torsions, abstract
geometrical Yang-Mills energy, Jacobi equations.

\section{Introduction}

LB-films are formed at sequential transfer of LB-monolayers of amphiphilic
molecules on a solid support. The LB-monolayer having high-ordered structure
self-organizes on a surface of liquid subphase at its compressing by a
barrier. 
LB-technique allows to form films containing nanosize objects (quantum dots,
nanotubes, 
nanowires, nanorods), which become ordered in the process of monolayer
formation. 
The resulting LB-films possess unique properties which can not be obtained
in %
"bulk", as they reveal nanoscale effects. 
Such objects find wide applications in optics and nanoelectronics \cite%
{Acharya[6]}, 
\cite{Motshmann[7]} 

In the process of LB-monolayer formation on a subphase surface one observes
a phase transition of the first order from 2-dimensional gas up to condense
state of monolayer. 
Up today, a theoretical description of the phase transition has been
performed 
with the use of the Landau theory 
\cite{Kaganer}. Nevertheless, such phenomenological description does not
allow to obtain some experimentally observable characteristics of
structurization process. 
It was also shown that the structurization is observed in quasi-equilibrium
state of monolayer at definite values of compressing speed whereas at
intermediate values the phase transition is not observed \cite{Hrushevsky}. 
At this, typical dependence of surface tension $P$ upon the surface area $S$
was observed for such values of tightening speed when the compression
process occurs far beyond the equilibrium. 
For the correct description of such peculiarities of the process it is
necessary to use kinetic and dynamical models 

In the paper \cite{Grushevskaya2011} an examination of LB-monolayer
structurization process using differential geometry methods has been
proposed and the corresponding 3D Finsler metric of space in which a
particle of monolayer is moving has been constructed. An obtained Lagrangian
will be used further. 

Let us start with the usual physical time defined by the Euclidian manifold $%
(T=[0,\infty ),h_{11}(t)=1),$ whose Christoffel symbol is $\kappa
_{11}^{1}(t)=0$. Let us also consider the plane manifold $\mathbb{R}^{2}$,
having the polar coordinates $\left( r,\varphi \right) $, where $r>0$ and $%
\varphi \in \lbrack 0,2\pi )$. Let us construct the $1$-jet vector bundle $%
J^{1}(T,\mathbb{R}^{2})\rightarrow \mathbb{R}\times \mathbb{R}^{2}$, locally
endowed with the coordinates $(t,x^{1},x^{2},y_{1}^{1},y_{1}^{2}):=\left(
t,r,\varphi ,\dot{r},\dot{\varphi}\right) $. We remind that on the $1$-jet
space $J^{1}(T,\mathbb{R}^{2})$ a local transformation of coordinates is
given by the rules (the Einstein convention of summation is used throughout
this work):\footnote{%
Throughout this paper the Latin letters $i,$ $j,$ $k,l,q,s,...$ take values
in the set $\{1,2\}$.}%
\begin{equation}
\widetilde{t}=\widetilde{t}(t),\quad \widetilde{x}^{q}=\widetilde{x}%
^{q}(x^{s}),\quad \widetilde{y}_{1}^{q}=\dfrac{\partial \widetilde{x}^{q}}{%
\partial x^{s}}\dfrac{dt}{d\widetilde{t}}\cdot y_{1}^{s},  \label{tr-rules-2}
\end{equation}%
where $d\widetilde{t}/dt\neq 0$ and rank $(\partial \widetilde{x}%
^{p}/\partial x^{q})=2$. According to Olver's opinion \cite{Olver}, we
believe that the $1$-jet spaces are natural houses for the study of many
physical theories. For such a reason, using the special function%
\begin{equation*}
f\left( z\right) \overset{def}{=}-\int_{-z}^{\infty }\frac{e^{-t}}{t}dt,
\end{equation*}%
we study the 2D-motion of a particle of monolayer governed by the jet
Lagrangian function $L:J^{1}(T,\mathbb{R}^{2})\rightarrow \mathbb{R}$
defined by%
\begin{equation}
L(t,r,\dot{r},\dot{\varphi})=\frac{m}{2}\dot{r}^{2}+\frac{mr^{2}}{2}\dot{%
\varphi}^{2}\underset{\overset{\shortparallel }{U_{s}(t,r)}}{\underbrace{%
-pr^{5}|V|e^{\frac{2|V|t}{r}}\cdot \dot{r}^{-1}+U(t,r)}},
\label{2D-Lagrangian}
\end{equation}%
where we have the following physical meanings: $\bullet $ $m$ is the \textit{%
mass} of the particule; $\bullet $ $V$ is the \textit{LB-monolayer
compressing rate}; $\bullet $ $p$ is a \textit{constant monolayer parameter}
given by the physical formula%
\begin{equation*}
p=\frac{\pi ^{2}q^{2}}{\varepsilon \varepsilon _{0}}\frac{\rho _{0}^{2}}{%
R_{0}^{2}};
\end{equation*}%
$\bullet $ $U_{s}(t,r)$ is an \textit{electrocapillarity potential energy}
including the monomolecular layer function%
\begin{eqnarray}
U(t,r) &=&p\left\{ \left[ -\frac{4}{3}r^{5}+\frac{16}{15}\left( |V|t\right)
r^{4}+\frac{1}{30}\left( |V|t\right) ^{2}r^{3}+\frac{1}{45}\left(
|V|t\right) ^{3}r^{2}+\right. \right.  \label{u} \\
&&\left. \left. +\frac{1}{45}\left( |V|t\right) ^{4}r+\frac{2}{45}\left(
|V|t\right) ^{5}\right] e^{\frac{2|V|t}{r}}-\frac{4}{45}\frac{\left(
|V|t\right) ^{6}}{r}f\left( \frac{2|V|t}{r}\right) \right\} .  \notag
\end{eqnarray}

The differential geometry (in the sense of nonlinear connections, Cartan
linear connections, d-torsions, d-curvatures etc.) produced by an arbitrary
jet Lagrangian function $L:J^{1}(\mathbb{R},M^{n})\rightarrow \mathbb{R}$ is
now completely done\ by Balan and Neagu in the monograph \cite{Balan-Neagu}.
The geometrical ideas from \cite{Balan-Neagu} are similar (but however
distinct ones) to those exposed by Miron and Anastasiei in the classical
Lagrangian geometry on tangent bundles (see \cite{Mir-An}). More accurately,
these jet geometrical Lagrangian ideas were initiated by Asanov in \cite%
{Asanov[2]} and developed further in multi-parameter way in the book \cite%
{Neagu Carte}, and in single-time way in \cite{Balan-Neagu}. In such a
context, this paper is devoted to the development on the $1$-jet vector
bundle $J^{1}(T,\mathbb{R}^{2})$ of the single-time Lagrange geometry (see 
\cite{Balan-Neagu}) produced by the 2D-monolayer Lagrangian (\ref%
{2D-Lagrangian}).

\section{The canonical nonlinear connection}

The local \textit{fundamental metrical d-tensor} produced by the
2D-monolayer Lagrangian (\ref{2D-Lagrangian}) is given by formula%
\begin{equation*}
g_{ij}=\frac{1}{2}\frac{\partial ^{2}L}{\partial y_{1}^{i}\partial y_{1}^{j}}%
.
\end{equation*}%
By direct computations, the fundamental metrical d-tensor $g_{ij}$ has the
matrix form%
\begin{equation}
g=\left( 
\begin{array}{cc}
g_{11} & g_{12} \\ 
g_{21} & g_{22}%
\end{array}%
\right) =\left( 
\begin{array}{cc}
\dfrac{m-2pr^{5}|V|e^{\frac{2|V|t}{r}}\cdot \dot{r}^{-3}}{2} & 0\medskip \\ 
0 & \dfrac{mr^{2}}{2}%
\end{array}%
\right) .  \label{g-jos-(ij)}
\end{equation}%
In order to have $\det g\neq 0$, we suppose that $g_{11}\neq 0$.
Consequently, the matrix $g=(g_{ij})$ admits the inverse $g^{-1}=(g^{jk})$,
whose entries are%
\begin{equation}
g^{-1}=\left( 
\begin{array}{cc}
g^{11} & g^{12} \\ 
g^{21} & g^{22}%
\end{array}%
\right) =\left( 
\begin{array}{cc}
\dfrac{2}{m-2pr^{5}|V|e^{\frac{2|V|t}{r}}\cdot \dot{r}^{-3}} & 0\medskip \\ 
0 & \dfrac{2}{mr^{2}}%
\end{array}%
\right) .  \label{g-sus-(jk)}
\end{equation}

Using a general formula from monograph \cite{Balan-Neagu}, we find the
following geometrical result:

\begin{proposition}
\label{semispray} For the 2D-monolayer Lagrangian (\ref{2D-Lagrangian}), the 
\textit{energy action functional}%
\begin{eqnarray*}
\mathbf{E}(t,r(t),\varphi (t)) &=&\int_{a}^{b}Ldt=\int_{a}^{b}\left[ \frac{m%
}{2}\left( \frac{dr}{dt}\right) ^{2}+\frac{mr^{2}}{2}\left( \frac{d\varphi }{%
dt}\right) ^{2}-pr^{5}|V|e^{\frac{2|V|t}{r}}\cdot \left( \frac{dr}{dt}%
\right) ^{-1}+U(t,r)\right] dt
\end{eqnarray*}%
produces on the $1$-jet space $J^{1}(T,\mathbb{R}^{2})$ the \textbf{%
canonical semispray} $\mathcal{S}=\left( H_{(1)1}^{(i)}=0,\text{ }%
G_{(1)1}^{(i)}\right) , $ where%
\begin{equation*}
\begin{array}{l}
G_{(1)1}^{(1)}=\dfrac{pr^{3}|V|e^{\frac{2|V|t}{r}}\left( 5r\dot{r}^{-1}-2|V|t%
\dot{r}^{-1}+|V|r\dot{r}^{-2}\right) -\dfrac{1}{2}\dfrac{\partial U}{%
\partial r}-\dfrac{mr}{2}\dot{\varphi}^{2}}{m-2pr^{5}|V|e^{\frac{2|V|t}{r}%
}\cdot \dot{r}^{-3}}\approx \medskip \\ 
\medskip \approx -\dfrac{1}{2}\dfrac{|V|}{r}\dot{r}+\left( \dfrac{|V|t}{r^{2}%
}-\dfrac{5}{2r}\right) \dot{r}^{2}-\dfrac{\dot{r}^{3}}{|V|}\left[ \dfrac{5}{3%
}r^{-1}-\dfrac{26\left( |V|t\right) }{15}r^{-2}+\dfrac{61\left( |V|t\right)
^{2}}{120}r^{-3}+\right. \\ 
\medskip \left. +\dfrac{\left( |V|t\right) ^{3}}{180}r^{-4}+\dfrac{\left(
|V|t\right) ^{4}}{180}r^{-5}+\dfrac{\left( |V|t\right) ^{5}}{90}r^{-6}-%
\dfrac{\left( |V|t\right) ^{6}}{45}r^{-7}e^{-\frac{2|V|t}{r}}f\left( \dfrac{%
2|V|t}{r}\right) \right] + \\ 
+\dfrac{m}{4p|V|}r^{-4}e^{-\frac{2|V|t}{r}}\dot{r}^{3}\dot{\varphi}%
^{2},\qquad G_{(1)1}^{(2)}=\dfrac{\dot{r}}{r}\dot{\varphi}.%
\end{array}%
\end{equation*}
\end{proposition}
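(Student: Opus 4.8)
The plan is to specialize the general expression for the canonical semispray attached to an arbitrary jet Lagrangian on $J^{1}(\mathbb{R},M^{n})$ from the monograph \cite{Balan-Neagu}. Because the temporal base carries the flat metric $h_{11}=1$ with vanishing Christoffel symbol $\kappa_{11}^{1}=0$, the temporal components collapse to $H_{(1)1}^{(i)}=0$ and every Christoffel-correction term drops out, so the spatial coefficients reduce to
\begin{equation*}
2G_{(1)1}^{(i)}=\frac{1}{2}\,g^{ik}\left(\frac{\partial^{2}L}{\partial x^{j}\partial y_{1}^{k}}\,y_{1}^{j}+\frac{\partial^{2}L}{\partial t\,\partial y_{1}^{k}}-\frac{\partial L}{\partial x^{k}}\right),
\end{equation*}
with $g^{ik}$ the inverse metric already recorded in \eqref{g-sus-(jk)}. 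The remainder of the argument is then substitution of the Lagrangian \eqref{2D-Lagrangian} into this formula.

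First I would compute the relevant partial derivatives of $L$. Since $\varphi$ is cyclic and $\dot{\varphi}$ enters $L$ only through $\tfrac{m}{2}r^{2}\dot{\varphi}^{2}$, the sole surviving mixed derivative for $i=2$ is $\partial^{2}L/\partial r\,\partial\dot{\varphi}=2mr\dot{\varphi}$; combining it with $g^{22}=2/(mr^{2})$ and the overall factor yields $G_{(1)1}^{(2)}=\dot{r}\dot{\varphi}/r$ at once. For $i=1$ I would record $\partial L/\partial\dot{r}=m\dot{r}+pr^{5}|V|e^{2|V|t/r}\dot{r}^{-2}$, then $\partial^{2}L/\partial r\,\partial\dot{r}=p|V|e^{2|V|t/r}(5r^{4}-2|V|t\,r^{3})\dot{r}^{-2}$ and $\partial^{2}L/\partial t\,\partial\dot{r}=2pr^{4}|V|^{2}e^{2|V|t/r}\dot{r}^{-2}$, together with $\partial L/\partial r$ (which carries $\partial U/\partial r$ and the $mr\dot{\varphi}^{2}$ term). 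Assembling the bracket and multiplying by $g^{11}=2/(m-2pr^{5}|V|e^{2|V|t/r}\dot{r}^{-3})$ reproduces the exact closed form: the numerator $pr^{3}|V|e^{2|V|t/r}(5r\dot{r}^{-1}-2|V|t\dot{r}^{-1}+|V|r\dot{r}^{-2})-\tfrac{1}{2}\partial U/\partial r-\tfrac{mr}{2}\dot{\varphi}^{2}$ over the denominator $m-2pr^{5}|V|e^{2|V|t/r}\dot{r}^{-3}$.

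To reach the approximate expression I would work in the regime where the electrocapillarity contribution dominates the inertial one, so that $m$ may be neglected against $2pr^{5}|V|e^{2|V|t/r}\dot{r}^{-3}$ in the denominator, giving $1/(m-2pr^{5}|V|e^{2|V|t/r}\dot{r}^{-3})\approx -1/(2pr^{5}|V|e^{2|V|t/r}\dot{r}^{-3})$, and divide the numerator term by term. The inertial pieces then collapse directly to $-\tfrac{1}{2}\tfrac{|V|}{r}\dot{r}$ and $\bigl(\tfrac{|V|t}{r^{2}}-\tfrac{5}{2r}\bigr)\dot{r}^{2}$, while the $\dot{\varphi}^{2}$ term (retaining the $m$ from its numerator) produces $\tfrac{m}{4p|V|}r^{-4}e^{-2|V|t/r}\dot{r}^{3}\dot{\varphi}^{2}$.

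The main obstacle is the $-\tfrac{\dot{r}^{3}}{|V|}[\cdots]$ bracket, which demands differentiating the explicit monomolecular layer function $U(t,r)$ of \eqref{u}. Here I would use that the special function satisfies $f'(z)=e^{z}/z$, so that with $z=2|V|t/r$ the chain rule gives $\partial_{r}f(2|V|t/r)=-e^{2|V|t/r}/r$; this cancels the exponential against the prefactor of the $f$-term and makes all polynomial contributions collapse onto the stated rational coefficients $\tfrac{5}{3}r^{-1}$, $-\tfrac{26|V|t}{15}r^{-2}$, $\tfrac{61(|V|t)^{2}}{120}r^{-3}$, and so on, leaving a single residual $f$-term. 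Keeping track of these coefficients and verifying that every factor $e^{2|V|t/r}$ combines correctly is the only delicate bookkeeping in the proof; everything else is routine differentiation and substitution into the specialized semispray formula.
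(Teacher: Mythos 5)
Your proposal is correct and takes essentially the same route as the paper's own proof: both reduce the statement to the general jet semispray formulas from \cite{Balan-Neagu}, namely $H_{(1)1}^{(i)}=-\tfrac{1}{2}\kappa_{11}^{1}(t)y_{1}^{i}=0$ and $G_{(1)1}^{(i)}=\tfrac{g^{is}}{4}\bigl[\tfrac{\partial^{2}L}{\partial x^{q}\partial y_{1}^{s}}y_{1}^{q}-\tfrac{\partial L}{\partial x^{s}}+\tfrac{\partial^{2}L}{\partial t\partial y_{1}^{s}}\bigr]$, specialized to the Lagrangian (\ref{2D-Lagrangian}). The extra details you supply (the explicit partial derivatives, $f'(z)=e^{z}/z$, and the approximation of neglecting $m$ against $2pr^{5}|V|e^{2|V|t/r}\dot{r}^{-3}$ in the denominator) are precisely the computations the paper leaves implicit, and they do reproduce the stated closed form and the approximate coefficients.
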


\begin{proof}
The Euler-Lagrange equations of the energy action functional $\mathbf{E}$
are equivalent with the equations%
\begin{equation*}
\frac{d^{2}x^{i}}{dt^{2}}+2H_{(1)1}^{(i)}\left( t,x^{k},y_{1}^{k}\right)
+2G_{(1)1}^{(i)}\left( t,x^{k},y_{1}^{k}\right) =0,\qquad y_{1}^{k}=\frac{%
dx^{k}}{dt},
\end{equation*}%
where the local geometrical components%
\begin{equation*}
H_{(1)1}^{(i)}\overset{def}{=}-\frac{1}{2}\kappa _{11}^{1}(t)y_{1}^{i}=0
\end{equation*}%
and%
\begin{equation*}
\begin{array}{lll}
G_{(1)1}^{(i)} & \overset{def}{=} & \dfrac{g^{is}}{4}\left[ \dfrac{\partial
^{2}L}{\partial x^{q}\partial y_{1}^{s}}y_{1}^{q}-\dfrac{\partial L}{%
\partial x^{s}}+\dfrac{\partial ^{2}L}{\partial t\partial y_{1}^{s}}\right]%
\end{array}%
\end{equation*}%
represent a semispray on the $1$-jet space $J^{1}(T,\mathbb{R}^{2})$.
\end{proof}

The polynomial approximate form of the semispray $\mathcal{S}$ produces the
canonical nonlinear connection (for more details, see \cite{Balan-Neagu})%
\begin{equation*}
\Gamma =\left( M_{(1)1}^{(i)}=2H_{(1)1}^{(i)}=0,\text{ }N_{(1)j}^{(i)}=\frac{%
\partial G_{(1)1}^{(i)}}{\partial y_{1}^{j}}\right) .
\end{equation*}%
Consequently, we can enunciate the following geometrical result:

\begin{corollary}
\label{spatial nonlinear connection}The \textbf{canonical nonlinear
connection} produced by the 2D-monolayer Lagrangian (\ref{2D-Lagrangian})
has the following approximate spatial components:%
\begin{equation*}
\begin{array}{l}
\medskip N_{(1)1}^{(1)}=-\dfrac{1}{2}\dfrac{|V|}{r}+\left( \dfrac{2|V|t}{%
r^{2}}-\dfrac{5}{r}\right) \dot{r}-\mathcal{U}\left( t,r\right) \dot{r}^{2}+%
\dfrac{3me^{-\frac{2|V|t}{r}}}{4p|V|r^{4}}\dot{r}^{2}\dot{\varphi}^{2}, \\ 
N_{(1)2}^{(1)}=\dfrac{me^{-\frac{2|V|t}{r}}}{2p|V|r^{4}}\dot{r}^{3}\dot{%
\varphi},\qquad N_{(1)1}^{(2)}=\dfrac{\dot{\varphi}}{r},\qquad
N_{(1)2}^{(2)}=\dfrac{\dot{r}}{r},%
\end{array}%
\end{equation*}%
where%
\begin{eqnarray*}
\mathcal{U}\left( t,r\right) &=&\dfrac{1}{|V|}\left[ 5r^{-1}-\dfrac{26\left(
|V|t\right) }{5}r^{-2}+\dfrac{61\left( |V|t\right) ^{2}}{40}r^{-3}+\dfrac{%
\left( |V|t\right) ^{3}}{60}r^{-4}+\right. \\
&&\left. +\dfrac{\left( |V|t\right) ^{4}}{60}r^{-5}+\dfrac{\left(
|V|t\right) ^{5}}{30}r^{-6}-\dfrac{\left( |V|t\right) ^{6}}{15}r^{-7}e^{-%
\frac{2|V|t}{r}}f\left( \dfrac{2|V|t}{r}\right) \right] .
\end{eqnarray*}
\end{corollary}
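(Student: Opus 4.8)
The plan is to obtain the nonlinear connection coefficients $N_{(1)j}^{(i)}$ directly from the defining formula $N_{(1)j}^{(i)} = \partial G_{(1)1}^{(i)}/\partial y_{1}^{j}$, using the approximate (polynomial) expression for the semispray components $G_{(1)1}^{(i)}$ already established in Proposition \ref{semispray}. Since the two velocity coordinates are $y_{1}^{1} = \dot{r}$ and $y_{1}^{2} = \dot{\varphi}$, there are four partial derivatives to compute, and each is a routine term-by-term differentiation of the polynomial-in-$\dot r$ expression for $G_{(1)1}^{(1)}$ and the product $G_{(1)1}^{(2)} = (\dot r/r)\dot\varphi$.

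First I would differentiate the approximate $G_{(1)1}^{(1)}$ with respect to $\dot r$ to get $N_{(1)1}^{(1)}$. Writing the approximate semispray schematically as $G_{(1)1}^{(1)} \approx -\tfrac{1}{2}\tfrac{|V|}{r}\dot r + A(t,r)\,\dot r^{2} - B(t,r)\,\dot r^{3} + C(t,r)\,\dot r^{3}\dot\varphi^{2}$, where $A = |V|t/r^{2} - 5/(2r)$, the bracketed coefficient is $B$, and $C = m\,e^{-2|V|t/r}/(4p|V|r^{4})$, the $\dot r$-derivative yields $-\tfrac{1}{2}|V|/r + 2A\,\dot r - 3B\,\dot r^{2} + 3C\,\dot r^{2}\dot\varphi^{2}$. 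Identifying $2A = 2|V|t/r^{2} - 5/r$ and recognizing that the coefficient function $\mathcal{U}(t,r)$ stated in the corollary equals exactly $3B(t,r)$ (i.e. the bracket $B$ scaled by $3$) confirms the $-\mathcal{U}\dot r^{2}$ term, while $3C = 3m\,e^{-2|V|t/r}/(4p|V|r^{4})$ reproduces the last term. Then $N_{(1)2}^{(1)} = \partial G_{(1)1}^{(1)}/\partial\dot\varphi = 2C\,\dot r^{3}\dot\varphi = m\,e^{-2|V|t/r}\dot r^{3}\dot\varphi/(2p|V|r^{4})$, as claimed.

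The second component is immediate: from $G_{(1)1}^{(2)} = (\dot r/r)\dot\varphi$ one reads off $N_{(1)1}^{(2)} = \partial G_{(1)1}^{(2)}/\partial\dot r = \dot\varphi/r$ and $N_{(1)2}^{(2)} = \partial G_{(1)1}^{(2)}/\partial\dot\varphi = \dot r/r$, with no further work. The whole computation rests on the definition of the canonical nonlinear connection $\Gamma$ stated in the excerpt, namely $N_{(1)j}^{(i)} = \partial G_{(1)1}^{(i)}/\partial y_{1}^{j}$, together with $M_{(1)1}^{(i)} = 2H_{(1)1}^{(i)} = 0$ which is inherited trivially from $H_{(1)1}^{(i)} = 0$.

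The only genuine point requiring care — the main obstacle, such as it is — is bookkeeping: one must differentiate the long bracketed cubic coefficient $B(t,r)$ in $G_{(1)1}^{(1)}$ correctly and verify that $\mathcal{U} = 3B$, paying particular attention to the last term involving $e^{-2|V|t/r}f(2|V|t/r)$, whose coefficient $-(|V|t)^{6}/45$ in the semispray must scale to $-(|V|t)^{6}/15$ in $\mathcal{U}$. Because the $f$-term and the accompanying exponential do not depend on $\dot r$, they are simply carried through the differentiation as constant multipliers of $\dot r^{2}$, so no derivative of the special function $f$ is needed; the factor-of-$3$ rescaling of every coefficient in the bracket is then the sole content of the verification.
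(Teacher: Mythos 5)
Your proposal is correct and takes essentially the same route as the paper: the paper's own proof consists precisely of applying the formulas $N_{(1)1}^{(1)}=\partial G_{(1)1}^{(1)}/\partial \dot{r}$, $N_{(1)2}^{(1)}=\partial G_{(1)1}^{(1)}/\partial \dot{\varphi}$, $N_{(1)1}^{(2)}=\partial G_{(1)1}^{(2)}/\partial \dot{r}$, $N_{(1)2}^{(2)}=\partial G_{(1)1}^{(2)}/\partial \dot{\varphi}$ to the approximate semispray of Proposition \ref{semispray} and performing ``some direct computations.'' Your term-by-term bookkeeping --- in particular the check that $\mathcal{U}=3B$ with every bracket coefficient (including the $e^{-2|V|t/r}f(2|V|t/r)$ term, which is independent of $\dot{r}$) scaled by the factor $3$ from differentiating $\dot{r}^{3}$ --- simply makes explicit the computations the paper leaves implicit, and it checks out.
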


\begin{proof}
Some direct computations and the formulas%
\begin{equation*}
N_{(1)1}^{(1)}=\frac{\partial G_{(1)1}^{(1)}}{\partial \dot{r}},\quad
N_{(1)2}^{(1)}=\frac{\partial G_{(1)1}^{(1)}}{\partial \dot{\varphi}},\quad
N_{(1)1}^{(2)}=\frac{\partial G_{(1)1}^{(2)}}{\partial \dot{r}},\quad
N_{(1)2}^{(2)}=\frac{\partial G_{(1)1}^{(2)}}{\partial \dot{\varphi}}
\end{equation*}%
imply the required result.
\end{proof}

\section{The Cartan canonical $\Gamma $-linear connection and its d-torsions}

We use the nonlinear connection from Corollary \ref{spatial nonlinear
connection} to construct the dual \textit{adapted bases} of d-vector fields%
\begin{equation}
\left\{ \frac{\partial }{\partial t}\text{ };\text{ }\frac{\delta }{\delta r}%
\text{ };\text{ }\frac{\delta }{\delta \varphi }\text{ };\text{ }\dfrac{%
\partial }{\partial \dot{r}}\text{ };\text{ }\dfrac{\partial }{\partial \dot{%
\varphi}}\right\} \subset \mathcal{X}(E)  \label{a-b-v}
\end{equation}%
and d-covector fields%
\begin{equation}
\left\{ dt\text{ };\text{ }dr\text{ };\text{ }d\varphi \text{ };\text{ }%
\delta \dot{r}\text{ };\text{ }\delta \dot{\varphi}\right\} \subset \mathcal{%
X}^{\ast }(E),  \label{a-b-co}
\end{equation}%
where $E=J^{1}(T,\mathbb{R}^{2})$, and we set%
\begin{equation*}
\begin{array}{l}
\dfrac{\delta }{\delta r}=\dfrac{\partial }{\partial r}-N_{(1)1}^{(1)}\dfrac{%
\partial }{\partial \dot{r}}-\dfrac{\dot{\varphi}}{r}\dfrac{\partial }{%
\partial \dot{\varphi}},\qquad\dfrac{\delta }{\delta \varphi }=\dfrac{%
\partial }{\partial \varphi }-\dfrac{me^{-\frac{2|V|t}{r}}}{2p|V|r^{4}}\dot{r%
}^{3}\dot{\varphi}\dfrac{\partial }{\partial \dot{r}}-\dfrac{\dot{r}}{r}%
\dfrac{\partial }{\partial \dot{\varphi}},\medskip \\ 
\delta \dot{r}=d\dot{r}+N_{(1)1}^{(1)}dr+\dfrac{me^{-\frac{2|V|t}{r}}}{%
2p|V|r^{4}}\dot{r}^{3}\dot{\varphi}d\varphi ,\qquad\delta \dot{\varphi}=d%
\dot{\varphi}+\dfrac{\dot{\varphi}}{r}dr+\dfrac{\dot{r}}{r}d\varphi .%
\end{array}%
\end{equation*}%
Note that, under a change of coordinates (\ref{tr-rules-2}), the elements of
the adapted bases (\ref{a-b-v}) and (\ref{a-b-co}) transform as classical
tensors. Consequently, all subsequent geometrical objects on the $1$-jet
space $J^{1}(T,\mathbb{R}^{2})$ (such as the Cartan canonical linear
connection, its torsion etc.) will be described in local adapted components.

Using a general result from \cite{Balan-Neagu}, by direct computations, we
give the following important geometrical result:

\begin{proposition}
The \textbf{Cartan canonical }$\Gamma $\textbf{-linear connection} of the
2D-monolayer Lagrangian (\ref{2D-Lagrangian}) has the following approximate
adapted components:%
\begin{equation*}
C\Gamma =\left( \kappa _{11}^{1}=0,\text{ }G_{j1}^{k},\text{ }L_{jk}^{i},%
\text{ }C_{j(k)}^{i(1)}\right) ,
\end{equation*}%
where%
\begin{equation*}
G_{j1}^{k}=\delta _{1}^{k}\cdot \delta _{j1}\cdot \dfrac{2pr^{4}|V|^{2}}{%
2pr^{5}|V|-m\dot{r}^{3}e^{-\frac{2|V|t}{r}}},
\end{equation*}%
\begin{equation*}
C_{j(k)}^{i(1)}=\delta _{1}^{i}\cdot \delta _{j1}\cdot \delta _{k1}\cdot 
\dfrac{3pr^{5}|V|}{m\dot{r}^{4}e^{-\frac{2|V|t}{r}}-2pr^{5}|V|\dot{r}}
\end{equation*}%
and%
\begin{equation*}
L_{11}^{1}=\dfrac{pr^{3}|V|\left( 2|V|t-5r\right) }{m\dot{r}^{3}e^{-\frac{%
2|V|t}{r}}-2pr^{5}|V|}-N_{(1)1}^{(1)}C_{1(1)}^{1(1)},
\end{equation*}%
\begin{equation*}
L_{12}^{1}=L_{21}^{1}=-\dfrac{me^{-\frac{2|V|t}{r}}}{2p|V|r^{4}}\dot{r}^{3}%
\dot{\varphi}\cdot C_{1(1)}^{1(1)},\qquad L_{22}^{1}=\dfrac{mr}{2pr^{5}|V|e^{%
\frac{2|V|t}{r}}\dot{r}^{-3}-m},
\end{equation*}%
\begin{equation*}
L_{11}^{2}=\dfrac{3\dot{\varphi}}{2r\dot{r}},\quad L_{12}^{2}=L_{21}^{2}=%
\dfrac{1}{r},\quad L_{22}^{2}=0.
\end{equation*}
\end{proposition}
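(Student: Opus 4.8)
The plan is to apply the general formula for the Cartan canonical $\Gamma$-linear connection from the monograph \cite{Balan-Neagu}, specialized to the present single-time Lagrange geometry on $J^{1}(T,\mathbb{R}^{2})$. Recall that for an arbitrary jet Lagrangian the adapted coefficients are built from the fundamental metrical d-tensor $g_{ij}$ and its inverse $g^{ij}$ (both already computed in \eqref{g-jos-(ij)} and \eqref{g-sus-(jk)}) together with the nonlinear connection coefficients $N_{(1)j}^{(i)}$ from Corollary \ref{spatial nonlinear connection}. The temporal coefficient is inherited directly as $\kappa_{11}^{1}=0$, since the base manifold $(T=[0,\infty),h_{11}=1)$ is Euclidean. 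The remaining three families of coefficients are given by the standard $h$- and $v$-Christoffel-type expressions, schematically
\begin{equation*}
G_{j1}^{k}=\frac{g^{ks}}{2}\frac{\delta g_{sj}}{\delta t},\quad
L_{jk}^{i}=\frac{g^{is}}{2}\left(\frac{\delta g_{sj}}{\delta x^{k}}+\frac{\delta g_{sk}}{\delta x^{j}}-\frac{\delta g_{jk}}{\delta x^{s}}\right),\quad
C_{j(k)}^{i(1)}=\frac{g^{is}}{2}\frac{\partial g_{sj}}{\partial y_{1}^{k}},
\end{equation*}
where the adapted (delta) derivatives $\delta/\delta x^{k}=\partial/\partial x^{k}-N_{(1)k}^{(s)}\,\partial/\partial y_{1}^{s}$ are those induced by the nonlinear connection listed above.

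The concrete strategy is then to substitute the explicit diagonal metric $g_{11}=\tfrac12\bigl(m-2pr^{5}|V|e^{2|V|t/r}\dot r^{-3}\bigr)$, $g_{22}=mr^{2}/2$, $g_{12}=0$ into these formulas. First I would compute the vertical coefficients $C_{j(k)}^{i(1)}$, since only $\partial g_{11}/\partial\dot r$ is nonzero; this immediately forces $C_{j(k)}^{i(1)}$ to vanish unless $i=j=k=1$, producing the stated single nonzero entry (whose form reflects the $\dot r^{-3}$ dependence of $g_{11}$, differentiated once). Next I would compute the temporal $h(t)$-coefficients $G_{j1}^{k}$: again only the $t$-dependence of $g_{11}$ survives, so only the $j=k=1$ component is nonzero, yielding the displayed expression with the factor $2pr^{4}|V|^{2}$ in the numerator coming from $\partial g_{11}/\partial t$. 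Finally the spatial $L_{jk}^{i}$-coefficients require the adapted $\delta/\delta r$ and $\delta/\delta\varphi$ derivatives of $g_{11}$ and $g_{22}$; here the term $-N_{(1)1}^{(1)}C_{1(1)}^{1(1)}$ appearing in $L_{11}^{1}$, and the $C_{1(1)}^{1(1)}$ factors in $L_{12}^{1}$, are precisely the corrections generated by the nonlinear-connection part of the delta derivative acting on $g_{11}$, while the clean entries $L_{12}^{2}=1/r$ and $L_{22}^{1}$ come from the ordinary $\varphi$-independent polar factor $g_{22}=mr^{2}/2$.

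The main obstacle is not conceptual but computational bookkeeping: carrying the $\delta$-derivatives of the unwieldy $g_{11}$ (with its coupled $r$, $t$, and $\dot r$ dependence through $r^{5}e^{2|V|t/r}\dot r^{-3}$) and then keeping only the polynomial \emph{approximate} terms consistent with the approximate semispray of Proposition \ref{semispray} and the approximate nonlinear connection of Corollary \ref{spatial nonlinear connection}. In particular one must verify that the truncation used upstream propagates coherently, so that the coefficients below are the correct approximate analogues rather than the exact ones; this is where the special-function term $f(2|V|t/r)$ and the exponential factors must be tracked carefully so as not to mix orders. Once the algebra is organized by isolating the two independent metric components and reusing the already-established $N$-coefficients and the single $C_{1(1)}^{1(1)}$, the remaining verification is a direct, if lengthy, substitution that yields exactly the tabulated $G_{j1}^{k}$, $L_{jk}^{i}$, and $C_{j(k)}^{i(1)}$.
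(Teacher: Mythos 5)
Your proposal is correct and follows essentially the same route as the paper: both invoke the general Cartan canonical connection formulas from \cite{Balan-Neagu} (the temporal coefficient $G_{j1}^{k}$ from the $t$-derivative of $g$, the spatial $L_{jk}^{i}$ from $\delta$-derivatives decomposed as $\Gamma_{jk}^{i}$ plus nonlinear-connection corrections involving $C_{j(1)}^{i(1)}$ and $N_{(1)j}^{(1)}$, and the vertical $C_{j(k)}^{i(1)}$ reduced to $\frac{g^{is}}{2}\partial g_{js}/\partial y_{1}^{k}$ by the total symmetry of $\partial^{3}L/\partial y_{1}^{i}\partial y_{1}^{j}\partial y_{1}^{k}$) and then substitute the diagonal metric \eqref{g-jos-(ij)}--\eqref{g-sus-(jk)} and the coefficients of Corollary \ref{spatial nonlinear connection}. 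The only cosmetic difference is your writing $G_{j1}^{k}$ with a $\delta/\delta t$ derivative where the paper uses $\partial/\partial t$, which is harmless here since $M_{(1)1}^{(i)}=0$.
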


\begin{proof}
Via the 2D-monolayer adapted derivative operators (\ref{a-b-v}), we use the
general formulas which give the adapted components of a Cartan canonical
connection (see \cite{Balan-Neagu}):%
\begin{equation*}
G_{j1}^{k}=\frac{g^{ks}}{2}\frac{\partial g_{sj}}{\partial t},\qquad
C_{j(k)}^{i(1)}=\frac{g^{is}}{2}\left( \frac{\partial g_{js}}{\partial
y_{1}^{k}}+\frac{\partial g_{ks}}{\partial y_{1}^{j}}-\frac{\partial g_{jk}}{%
\partial y_{1}^{s}}\right) =\frac{g^{is}}{2}\frac{\partial g_{js}}{\partial
y_{1}^{k}},
\end{equation*}%
\begin{eqnarray*}
L_{jk}^{i} &=&\frac{g^{is}}{2}\left( \frac{\delta g_{js}}{\delta x^{k}}+%
\frac{\delta g_{ks}}{\delta x^{j}}-\frac{\delta g_{jk}}{\delta x^{s}}\right)
= \\
&=&\Gamma
_{jk}^{i}-N_{(1)k}^{(q)}C_{j(q)}^{i(1)}-N_{(1)j}^{(q)}C_{k(q)}^{i(1)}+\frac{%
g^{is}}{2}N_{(1)s}^{(q)}\frac{\partial g_{jk}}{\partial y_{1}^{q}}= \\
&=&\Gamma
_{jk}^{i}-N_{(1)k}^{(1)}C_{j(1)}^{i(1)}-N_{(1)j}^{(1)}C_{k(1)}^{i(1)}+\frac{%
g^{is}}{2}N_{(1)s}^{(q)}\frac{\partial g_{jk}}{\partial y_{1}^{q}},
\end{eqnarray*}%
where%
\begin{equation*}
\Gamma _{jk}^{i}=\frac{g^{is}}{2}\left( \frac{\partial g_{js}}{\partial x^{k}%
}+\frac{\partial g_{ks}}{\partial x^{j}}-\frac{\partial g_{jk}}{\partial
x^{s}}\right) ,
\end{equation*}%
and we remind that we have $(t,x^{1},x^{2},y_{1}^{1},y_{1}^{2})=\left(
t,r,\varphi ,\dot{r},\dot{\varphi}\right) $.
\end{proof}

\begin{remark}
The Cartan canonical connection $C\Gamma $ has the me\-tri\-cal properties%
\begin{equation*}
\begin{array}{lll}
g_{ij/1}=g_{\text{ \ }/1}^{ij}=0, & g_{ij|k}=g_{\text{ \ }|k}^{ij}=0, & 
g_{ij}|_{(k)}^{(1)}=g^{ij}|_{(k)}^{(1)}=0,%
\end{array}%
\end{equation*}%
where \textquotedblright $_{/1}$\textquotedblright , \textquotedblright $%
_{|k}$\textquotedblright\ and \textquotedblright $|_{(k)}^{(1)}$%
\textquotedblright\ are the $T$\textbf{-horizontal}\textit{, }$\mathbb{R}%
^{2} $\textbf{-horizontal }and \textbf{vertical covariant derivatives }%
produced by the Cartan linear connection $C\Gamma $ (for more details, see 
\cite{Balan-Neagu}). Consequently, in our jet single-time Lagrange
geometrization, the Cartan canonical connection plays a similar role to that
of the Levi-Civita connection from classical Riemannian geometry.
\end{remark}

\begin{proposition}
The Cartan canonical connection $C\Gamma $ of the 2D-monolayer Lagrangian (%
\ref{2D-Lagrangian}) has the next approximate adapted local \textbf{torsion
d-tensors}:%
\begin{equation*}
T_{1j}^{k}=\mathcal{P}_{(1)1(j)}^{(k)\text{ }(1)}=-G_{j1}^{k},\qquad
P_{i(j)}^{k(1)}=C_{i(j)}^{k(1)},
\end{equation*}%
\begin{equation*}
H_{(1)1j}^{(k)}=\delta _{1}^{k}\cdot \left[ \delta _{j1}\cdot \left( \frac{%
\partial \mathcal{U}}{\partial t}\dot{r}^{2}-\frac{2|V|}{r^{2}}\dot{r}+%
\dfrac{3me^{-\frac{2|V|t}{r}}}{2pr^{5}}\dot{r}^{2}\dot{\varphi}^{2}\right)
+\delta _{j2}\cdot \dfrac{me^{-\frac{2|V|t}{r}}}{pr^{5}}\dot{r}^{3}\dot{%
\varphi}\right] ,
\end{equation*}%
\begin{equation*}
R_{(1)12}^{(1)}=-R_{(1)21}^{(1)}=\dfrac{me^{-\frac{2|V|t}{r}}}{p|V|r^{4}}%
\left[ \frac{3}{2}N_{(1)1}^{(1)}-\frac{1}{2}\frac{\partial N_{(1)1}^{(1)}}{%
\partial \dot{r}}\dot{r}+\left( \frac{1}{r}-\dfrac{|V|t}{r^{2}}\right) \dot{r%
}\right] \dot{r}^{2}\dot{\varphi},
\end{equation*}%
\begin{equation*}
R_{(1)12}^{(2)}=-R_{(1)21}^{(2)}=\frac{1}{r}N_{(1)1}^{(1)},
\end{equation*}%
\begin{equation*}
R_{(1)11}^{(1)}=R_{(1)22}^{(1)}=R_{(1)11}^{(2)}=R_{(1)22}^{(2)}=0,
\end{equation*}%
\begin{equation*}
P_{(1)1(1)}^{(1)\text{ }(1)}=\frac{\partial N_{(1)1}^{(1)}}{\partial \dot{r}}%
+N_{(1)1}^{(1)}C_{1(1)}^{1(1)}-\dfrac{pr^{3}|V|\left( 2|V|t-5r\right) }{m%
\dot{r}^{3}e^{-\frac{2|V|t}{r}}-2pr^{5}|V|},
\end{equation*}%
\begin{equation*}
P_{(1)1(2)}^{(1)\text{ }(1)}=P_{(1)2(1)}^{(1)\text{ }(1)}=\dfrac{me^{-\frac{%
2|V|t}{r}}}{2p|V|r^{4}}\left( 3+\dot{r}\cdot C_{1(1)}^{1(1)}\right) \dot{r}%
^{2}\dot{\varphi},
\end{equation*}%
\begin{equation*}
P_{(1)2(2)}^{(1)\text{ }(1)}=\dfrac{me^{-\frac{2|V|t}{r}}}{2p|V|r^{4}\dot{r}%
^{-3}}-\dfrac{mr}{2pr^{5}|V|e^{\frac{2|V|t}{r}}\dot{r}^{-3}-m},
\end{equation*}%
\begin{equation*}
P_{(1)1(1)}^{(2)\text{ }(1)}=-\dfrac{3\dot{\varphi}}{2r\dot{r}},\qquad
P_{(1)1(2)}^{(2)\text{ }(1)}=P_{(1)2(1)}^{(2)\text{ }(1)}=P_{(1)2(2)}^{(2)%
\text{ }(1)}=0.
\end{equation*}
\end{proposition}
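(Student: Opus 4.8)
The plan is to invoke the general expressions for the local torsion d-tensors of a Cartan canonical $\Gamma$-linear connection in the single-time jet Lagrange geometry of \cite{Balan-Neagu}, and then to substitute the connection coefficients $G_{j1}^{k}$, $L_{jk}^{i}$, $C_{j(k)}^{i(1)}$ from the preceding Proposition together with the nonlinear connection components $N_{(1)j}^{(i)}$ from Corollary \ref{spatial nonlinear connection}. Since here $\kappa_{11}^{1}=0$ and $M_{(1)1}^{(i)}=0$, the $T$-horizontal direction reduces to $\partial/\partial t$ and the adapted-frame brackets simplify. In this framework the essential torsion d-tensors are governed by a short list of generic formulas: $T_{1j}^{k}=-G_{j1}^{k}$ and $\mathcal{P}_{(1)1(j)}^{(k)\,(1)}=-G_{j1}^{k}$ (from $\mathbb{T}(\partial/\partial t,\delta/\delta x^{j})$ and $\mathbb{T}(\partial/\partial t,\partial/\partial y_{1}^{j})$); $P_{i(j)}^{k(1)}=C_{i(j)}^{k(1)}$; $H_{(1)1j}^{(k)}=-\partial N_{(1)j}^{(k)}/\partial t$; the nonlinear-connection curvature $R_{(1)jk}^{(i)}=\delta N_{(1)j}^{(i)}/\delta x^{k}-\delta N_{(1)k}^{(i)}/\delta x^{j}$ coming from $[\delta/\delta x^{j},\delta/\delta x^{k}]$; and the mixed torsion $P_{(1)j(k)}^{(i)\,(1)}=\partial N_{(1)j}^{(i)}/\partial y_{1}^{k}-L_{kj}^{i}$.

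The first three families are immediate, being the relevant connection coefficients up to sign, so $T_{1j}^{k}$ and $\mathcal{P}_{(1)1(j)}^{(k)\,(1)}$ are read off directly from $G_{j1}^{k}$, while $P_{i(j)}^{k(1)}=C_{i(j)}^{k(1)}$ follows at once. I would also note that the purely spatial $h$-$h$ torsion $L_{jk}^{i}-L_{kj}^{i}$ and the $v$-$v$ torsion $C_{j(k)}^{i(1)}-C_{k(j)}^{i(1)}$ vanish by the symmetry of $L$ and $C$ established earlier, which is why they do not appear in the list. The component $H_{(1)1j}^{(k)}=-\partial N_{(1)j}^{(k)}/\partial t$ requires only an explicit time derivative: in $N_{(1)1}^{(1)}$ the term $(2|V|t/r^{2}-5/r)\dot{r}$ yields $-2|V|\dot{r}/r^{2}$, the term $-\mathcal{U}(t,r)\dot{r}^{2}$ yields $+(\partial\mathcal{U}/\partial t)\dot{r}^{2}$, and differentiating the exponential $e^{-2|V|t/r}$ in the last summand yields the $3me^{-2|V|t/r}/(2pr^{5})$ contribution, reproducing the displayed $H_{(1)11}^{(1)}$; the $\delta_{j2}$ piece comes identically from $N_{(1)2}^{(1)}$.

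For the mixed torsions $P_{(1)j(k)}^{(i)\,(1)}=\partial N_{(1)j}^{(i)}/\partial y_{1}^{k}-L_{kj}^{i}$ I would take the vertical derivatives of the $N$'s and substitute the $L$'s; using the identity $L_{11}^{1}=pr^{3}|V|(2|V|t-5r)/(m\dot{r}^{3}e^{-2|V|t/r}-2pr^{5}|V|)-N_{(1)1}^{(1)}C_{1(1)}^{1(1)}$ from the previous Proposition, the $N_{(1)1}^{(1)}C_{1(1)}^{1(1)}$ term reorganizes into the form shown for $P_{(1)1(1)}^{(1)\,(1)}$, while the remaining components follow by the same substitution. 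Among the curvature torsions, antisymmetry in $(j,k)$ forces $R_{(1)11}^{(i)}=R_{(1)22}^{(i)}=0$ and $R_{(1)21}^{(i)}=-R_{(1)12}^{(i)}$, so only $R_{(1)12}^{(1)}$ and $R_{(1)12}^{(2)}$ are needed; the latter is short, since $N_{(1)1}^{(2)}=\dot{\varphi}/r$ and $N_{(1)2}^{(2)}=\dot{r}/r$ are simple, and after applying $\delta/\delta\varphi$ and $\delta/\delta r$ the bare $-\dot{r}/r^{2}$ terms cancel and leave $R_{(1)12}^{(2)}=N_{(1)1}^{(1)}/r$.

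The main obstacle will be $R_{(1)12}^{(1)}$, the single torsion that feeds on the full $N_{(1)1}^{(1)}$, which carries both the polynomial-in-$|V|t$ coefficients and the special function $f$. Rather than expand everything, the efficient route is to keep the answer symbolic in $N_{(1)1}^{(1)}$ and its vertical derivative $\partial N_{(1)1}^{(1)}/\partial\dot{r}$: I would compute $\delta N_{(1)1}^{(1)}/\delta\varphi-\delta N_{(1)2}^{(1)}/\delta r$, tracking the $\partial/\partial\dot{r}$ and $\partial/\partial\dot{\varphi}$ corrections hidden in the $\delta$-operators, and collect the surviving terms into the bracket $\tfrac{3}{2}N_{(1)1}^{(1)}-\tfrac{1}{2}(\partial N_{(1)1}^{(1)}/\partial\dot{r})\dot{r}+(1/r-|V|t/r^{2})\dot{r}$ multiplied by the prefactor $me^{-2|V|t/r}\dot{r}^{2}\dot{\varphi}/(p|V|r^{4})$. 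Keeping $N_{(1)1}^{(1)}$ unexpanded is precisely what makes this tractable, since a full expansion would reintroduce $f$ and bury the structure of the result.
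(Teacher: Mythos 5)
Your proposal is correct and takes essentially the same route as the paper: the paper's proof likewise just quotes the six effective torsion d-tensor formulas $T_{1j}^{k}=-G_{j1}^{k}$, $H_{(1)1j}^{(k)}=-\partial N_{(1)j}^{(k)}/\partial t$, $R_{(1)ij}^{(k)}=\delta N_{(1)i}^{(k)}/\delta x^{j}-\delta N_{(1)j}^{(k)}/\delta x^{i}$, $\mathcal{P}_{(1)1(j)}^{(k)\,(1)}=-G_{j1}^{k}$, $P_{i(j)}^{k(1)}=C_{i(j)}^{k(1)}$, $P_{(1)i(j)}^{(k)\,(1)}=\partial N_{(1)i}^{(k)}/\partial y_{1}^{j}-L_{ij}^{k}$ from the Balan--Neagu monograph and leaves the substitutions as ``direct computations,'' which your explicit checks (including the symbolic treatment of $R_{(1)12}^{(1)}$ in terms of $N_{(1)1}^{(1)}$ and $\partial N_{(1)1}^{(1)}/\partial\dot{r}$, which does reproduce the displayed bracket) carry out correctly. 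The only cosmetic discrepancy is your $L_{kj}^{i}$ versus the paper's $L_{jk}^{i}$ in the mixed torsion, which is immaterial since the Cartan coefficients are symmetric in the lower indices, as you note.
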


\begin{proof}
Generally, a Cartan canonical connection on the $1$-jet space $J^{1}(T,%
\mathbb{R}^{2})$ is characterized by \textit{six} effective d-tensors of
torsion (for more details, see \cite{Balan-Neagu}). For our Cartan canonical
connection $C\Gamma $ these reduce to the following expressions:%
\begin{equation*}
\begin{array}{ccc}
T_{1j}^{k}=-G_{j1}^{k}, & H_{(1)1j}^{(k)}=-{\dfrac{\partial N_{(1)j}^{(k)}}{%
\partial t}}, & R_{(1)ij}^{(k)}={\dfrac{\delta N_{(1)i}^{(k)}}{\delta x^{j}}}%
-{\dfrac{\delta N_{(1)j}^{(k)}}{\delta x^{i}}},\medskip \\ 
\mathcal{P}_{(1)1(j)}^{(k)\text{ }(1)}=-G_{j1}^{k}, & 
P_{i(j)}^{k(1)}=C_{i(j)}^{k(1)}, & {P_{(1)i(j)}^{(k)\text{ }(1)}={\dfrac{%
\partial N_{(1)i}^{(k)}}{\partial y_{1}^{j}}}-L_{ij}^{k},}%
\end{array}%
\end{equation*}%
where we have $(t,x^{1},x^{2},y_{1}^{1},y_{1}^{2})=\left( t,r,\varphi ,\dot{r%
},\dot{\varphi}\right) $.
\end{proof}

\section{From the dynamics of 2D-monolayer to a jet geometrical Yang-Mills
energy\label{Y-M-energy}}

In monograph \cite{Balan-Neagu}, using only a given Lagrangian function $%
L(t,x,y)$ on the $1$-jet space $J^{1}(\mathbb{R},M^{n})$, an
electromagnetic-like geometrical model was also created. In the background
of the electromagnetic-like geometrical formalism from \cite{Balan-Neagu},
one works with an \textit{electromagnetic d-form} (this Section the Latin
letters run from $1$ to $n$) $\mathbb{F}=F_{(i)j}^{(1)}\delta
y_{1}^{i}\wedge dx^{j},$ where%
\begin{equation*}
F_{(i)j}^{(1)}=\frac{h^{11}}{2}\left[
g_{js}N_{(1)i}^{(s)}-g_{is}N_{(1)j}^{(s)}+\left(
g_{iq}L_{js}^{q}-g_{jq}L_{is}^{q}\right) y_{1}^{s}\right] .
\end{equation*}%
The above electromagnetic components are characterized by the following 
\textit{geometrical Maxwell-like equations} (for more details, see \cite%
{Balan-Neagu}):%
\begin{eqnarray*}
F_{(i)j/1}^{(1)} &=&\frac{1}{2}\mathcal{A}_{\left\{ i,j\right\} }\left\{ 
\overline{D}_{(i)1|j}^{(1)}-D_{(i)s}^{(1)}G_{j1}^{s}+d_{(i)(s)}^{(1)(1)}%
\left( {\dfrac{\delta M_{(1)1}^{(s)}}{\delta x^{j}}}-{\dfrac{\delta
N_{(1)j}^{(s)}}{\delta t}}\right) -\right. \\
&&\left. -\left[ C_{j(s)}^{l(1)}\left( {\dfrac{\delta M_{(1)1}^{(s)}}{\delta
x^{i}}}-{\dfrac{\delta N_{(1)i}^{(s)}}{\delta t}}\right) -G_{i1|j}^{l}\right]
h^{11}g_{lq}y_{1}^{q}\right\} ,
\end{eqnarray*}%
\begin{equation*}
\sum_{\{i,j,k\}}F_{(i)j|k}^{(1)}=-\frac{1}{4}\sum_{\{i,j,k\}}\frac{\partial
^{3}L}{\partial y_{1}^{i}\partial y_{1}^{l}\partial y_{1}^{s}}\left[ {\dfrac{%
\delta N_{(1)j}^{(s)}}{\delta x^{k}}}-{\dfrac{\delta N_{(1)k}^{(s)}}{\delta
x^{j}}}\right] y_{1}^{l},
\end{equation*}%
\begin{equation*}
\sum_{\{i,j,k\}}F_{(i)j}^{(1)}|_{(k)}^{(1)}=0,
\end{equation*}%
where $\mathcal{A}_{\left\{ i,j\right\} }$ means an alternate sum, $%
\sum_{\{i,j,k\}}$ means a cyclic sum, and we have%
\begin{equation*}
\overline{D}_{(i)1}^{(1)}=\frac{h^{11}}{2}\left( \frac{\partial g_{is}}{%
\partial t}-M_{(1)1}^{(q)}\frac{\partial g_{is}}{\partial y_{1}^{q}}\right)
y_{1}^{s},\quad D_{(i)j}^{(1)}=h^{11}g_{iq}\left[
-N_{(1)j}^{(q)}+L_{js}^{q}y_{1}^{s}\right] ,
\end{equation*}%
\begin{equation*}
d_{(i)(j)}^{(1)(1)}=h^{11}\left[ g_{ij}+g_{iq}C_{s(j)}^{q(1)}y_{1}^{s}\right]
,
\end{equation*}%
\begin{equation*}
\overline{D}_{(i)1|j}^{(1)}=\frac{\delta \overline{D}_{(i)1}^{(1)}}{\delta
x^{j}}-\overline{D}_{(s)1}^{(1)}L_{ij}^{s},\quad G_{i1|j}^{k}=\frac{\delta
G_{i1}^{k}}{\delta x^{j}}+G_{i1}^{s}L_{sj}^{k}-G_{s1}^{k}L_{ij}^{s},
\end{equation*}%
\begin{equation*}
F_{(i)j/1}^{(1)}=\frac{\partial F_{(i)j}^{(1)}}{\partial t}-M_{(1)1}^{(q)}%
\frac{\partial F_{(i)j}^{(1)}}{\partial y_{1}^{q}}+F_{(i)j}^{(1)}\kappa
_{11}^{1}-F_{(s)j}^{(1)}G_{i1}^{s}-F_{(i)s}^{(1)}G_{j1}^{s},
\end{equation*}%
\begin{equation*}
F_{(i)j|k}^{(1)}=\frac{\delta F_{(i)j}^{(1)}}{\delta x^{k}}%
-F_{(s)j}^{(1)}L_{ik}^{s}-F_{(i)s}^{(1)}L_{jk}^{s},\qquad
F_{(i)j}^{(1)}|_{(k)}^{(1)}=\frac{\partial F_{(i)j}^{(1)}}{\partial y_{1}^{k}%
}-F_{(s)j}^{(1)}C_{i(k)}^{s(1)}-F_{(i)s}^{(1)}C_{j(k)}^{s(1)}.
\end{equation*}

\begin{example}
The Lagrangian function that governs the movement law of a particle of mass $%
m\neq 0$ and electric charge $e$, which is concomitantly displaced into an
environment endowed both with a gravitational field and an electromagnetic
one, is given by (see \cite{Mir-An} and \cite{Balan-Neagu})%
\begin{equation}
L(t,x^{k},y_{1}^{k})=mch^{11}(t)\varphi _{ij}(x^{k})y_{1}^{i}y_{1}^{j}+{%
\frac{2e}{m}}A_{(i)}^{(1)}(x^{k})y_{1}^{i}+\mathcal{F}(t,x^{k}),
\label{x05-L-(jet)-ED}
\end{equation}%
where the semi-Riemannian metric $\varphi _{ij}(x)$ represents the \textbf{%
gravitational potentials}, $A_{(i)}^{(1)}(x)$ represent the \textbf{%
electromagnetic potential}\emph{, }and $\mathcal{F}(t,x)$ is a smooth 
\textbf{potential function}. In the case of the usual time $(\mathbb{R}%
,h_{11}(t)=1),$ applying to the Lagrangian (\ref{x05-L-(jet)-ED}) our jet
geometrical formalism$,$ the \textbf{electromagnetic components} take the
form%
\begin{equation*}
F_{(i)j}^{(1)}=-\frac{e}{2m}\left( \frac{\partial A_{(i)}^{(1)}}{\partial
x^{j}}-\frac{\partial A_{(j)}^{(1)}}{\partial x^{i}}\right) ,
\end{equation*}%
and the \textbf{Maxwell geometrical equations}\ reduce to the classical
ones: $\sum_{\{i,j,k\}}F_{(i)j|k}^{(1)}=0, $ where ($\gamma _{jk}^{i}$ are
the Christoffel symbols of the semi-Riemannian metric $\varphi _{ij}$)%
\begin{equation*}
F_{(i)j|k}^{(1)}=\frac{\partial F_{(i)j}^{(1)}}{\partial x^{k}}%
-F_{(m)j}^{(1)}\gamma _{ik}^{m}-F_{(i)m}^{(1)}\gamma _{jk}^{m}.
\end{equation*}%
In our opinion, this fact suggests some kind of naturalness attached to our
jet single-time Lagrangian geometrical electromagnetic-like theory.
\end{example}

On our particular $1$-jet space $J^{1}(T,\mathbb{R}^{2})$, the 2D-monolayer
Lagrangian (\ref{2D-Lagrangian}) produces the electromagnetic-like $2$-form 
\begin{equation*}
\mathbb{F}=F_{(1)1}^{(1)}\delta \dot{r}\wedge dr+F_{(1)2}^{(1)}\delta \dot{r}%
\wedge d\varphi +F_{(2)1}^{(1)}\delta \dot{\varphi}\wedge
dr+F_{(2)2}^{(1)}\delta \dot{\varphi}\wedge d\varphi ,
\end{equation*}%
where $F_{(1)1}^{(1)}=F_{(2)2}^{(1)}=0$ and%
\begin{equation*}
F_{(2)1}^{(1)}=-F_{(1)2}^{(1)}=\dfrac{1}{2}\left[ \dfrac{3mr}{2}+\dfrac{%
m^{2}e^{-\frac{2|V|t}{r}}}{4p|V|r^{4}}\cdot \dot{r}^{3}\right] \cdot \dot{%
\varphi}.
\end{equation*}

The magnitude (i.e., the \textit{jet geometrical Yang-Mills energy}) of our
theoretical electromagnetic-like field $\mathbb{F}$ produced by the
2D-monolayer Lagrangian (\ref{2D-Lagrangian}) is defined by the formula%
\begin{equation*}
\mathcal{EYM}(\mathbb{F)=}\frac{1}{2m}\cdot \text{Trace}\left[ \mathbf{F}%
\cdot \text{ }^{T}\mathbf{F}\right] =\frac{1}{m}\cdot \left[ F_{(1)2}^{(1)}%
\right] ^{2},
\end{equation*}%
where the skew-symmetric electromagnetic matrix is%
\begin{equation*}
\mathbf{F}=\left( 
\begin{array}{cc}
0 & F_{(1)2}^{(1)} \\ 
-F_{(1)2}^{(1)} & 0%
\end{array}%
\right) \in o(2)=L(O(2)).
\end{equation*}%
Note that the matrix $\mathbf{F}$ belongs to the Lie algebra $o(2)$ of the
Lie group of orthogonal matrices%
\begin{equation*}
O(2)=\left\{ A\in GL_{2}(\mathbb{R})\text{ }|\text{ }A\cdot \text{ }%
^{T}A=I_{2}\right\} .
\end{equation*}

The jet geometrical Yang-Mills energy produced by the 2D-monolayer
Lagrangian (\ref{2D-Lagrangian}) cancels iff%
\begin{equation*}
F_{(1)2}^{(1)}=0\Leftrightarrow \left[ \dfrac{3mr}{2}+\dfrac{m^{2}e^{-\frac{%
2|V|t}{r}}}{4p|V|r^{4}}\cdot \dot{r}^{3}\right] \cdot \dot{\varphi}=0.
\end{equation*}

\section{Jacobi equations for deviation of geodesics from an instanton-like
solution. Computer imaging simulations and physical interpretations}

Let us find the singular trajectories of the differential equations 
from the Proof of Proposition \ref{semispray}: 
\begin{equation}
{\dfrac{dy_{1}^{i}}{dt}}+2G_{(1)1}^{(i)}\left( t,x^{k},y_{1}^{k}\right)
=0,\qquad \dfrac{dx^{k}}{dt}=y_{1}^{k},  \label{geodesic-equation}
\end{equation}%
with 
\begin{equation}
G_{(1)1}^{(i)}=\dfrac{g^{is}}{4}\left[ \dfrac{\partial ^{2}L}{\partial
x^{q}\partial y_{1}^{s}}y_{1}^{q}-\dfrac{\partial L}{\partial x^{s}}+\frac{%
\partial ^{2}L}{\partial t\partial y_{1}^{s}}\right] ,
\label{add1to-geodesic-equation}
\end{equation}%
where%
\begin{equation}
\begin{array}{l}
G_{(1)1}^{(1)}=\dfrac{pr^{3}|V|e^{\frac{2|V|t}{r}}\left( 5r\dot{r}^{-1}-2|V|t%
\dot{r}^{-1}+|V|r\dot{r}^{-2}\right) -\dfrac{1}{2}\dfrac{\partial U}{%
\partial r}-\dfrac{mr}{2}\dot{\varphi}^{2}}{m-2pr^{5}|V|e^{\frac{2|V|t}{r}%
}\cdot \dot{r}^{-3}},\qquad G_{(1)1}^{(2)}=\dfrac{\dot{r}}{r}\dot{\varphi}.%
\end{array}
\label{add2to-geodesic-equation}
\end{equation}%
One can note that a singularity in (\ref{add2to-geodesic-equation})
represents itself a zero difference of the kinetic energy and a part of the
potential energy: 
\begin{equation}
\frac{m}{2}\dot{r}^{2}+\frac{mr^{2}}{2}\dot{\varphi}^{2}-pr^{5}|V|e^{\frac{%
2|V|t}{r}}\cdot \dot{r}^{-1},\qquad \dot{\varphi}\rightarrow 0.
\label{singular-differenc-2D-Lagrangian}
\end{equation}%
Therefore, in the neighbourhood of singular behavior 
of $G_{(1)1}^{(1)}$, the effects stipulated by a rest 
$U$ of the potential energy 
$U_{s}$ are small. 
Consequently, these above allow us to make the proposal that the Lagrangian 
$L$ can be expanded into series 
$L=L_{0}+\delta L,$ where the unperturbed Lagrangian 
$L_{0}$ gives trajectories being nearby the trajectories with zero
Hamiltonian: 
\begin{equation}
E_{inst}=\frac{m}{2}\dot{r}^{2}+\frac{mr^{2}}{2}\dot{\varphi}%
^{2}+pr^{5}|V|e^{\frac{2|V|t}{r}}\cdot \dot{r}^{-1}-U=0.
\label{2D-Lagrangian-singular-part}
\end{equation}

A solution which obeys condition 
(\ref{2D-Lagrangian-singular-part}) is called 
an \textit{instanton-like solution}. In our case, the geodesic trajectory of
a particle in 
2D-membrane is closed 
to the instanton-like solution $\vec{r_{0}}$. It has been shown in \cite%
{Grushevskaya2011}, \cite{Grushevskaya2012} that a calculation of the
trajectories 
having zero energy in a field of the electrocapillarity forces of monolayer 
can be performed in a resonant approximation 
in form of a correction $\vec{\delta r}$ for $\vec{r_{0}}$. The correction 
$\vec{\delta r}$ is described by some Jacobi equations 
which are equivalent to some equations that describe the physical system
under the action 
of the field having the frequency closed to 
an eigenfrequency of the system. Such type of action is called a \textit{%
resonant action}. Now, let us simulate\footnote{%
Please note that the four computer graphic simulations used and cited in
this preliminary version of our paper are missing because of some uploading
problems. If some researcher is interested in these graphics, we warmly and
cordially invite him to contact any author of this research work.} the
trajectories described by the equations 
(\ref{geodesic-equation} -- \ref{add2to-geodesic-equation}). These
trajectories are represented 
in Fig.~1. According to images in the figure 
steady states are limiting cycles, and a relaxation to the limiting cycle is
radial one with $\dot{\varphi}\rightarrow 0$ for the large speeds. Hence,
the electrocapillar action on the system under consideration holds a
spherical symmetry.

Therefore, as resonant action on a particle of the monolayer one chooses a
geometrical Yang-Mills field with zero energy and 
the symmetry group $O(2)$, that was proposed in Section 
\ref{Y-M-energy}: 
\begin{equation}
F_{(1)2}^{(1)}\overset{def}{=}\left[ \dfrac{3mr}{2}+\dfrac{m^{2}e^{\frac{%
-2|V|t}{r}}}{4p|V|r^{4}}\cdot \dot{r}^{3}\right] =0.  \label{Yang-mils-field}
\end{equation}%
%
%
%
%
%
%
%
%
%
%
%
%
%
%
%
%
%
%
%
%
%
%

Now, we can find the trajectory of a particle in the field 
(\ref{Yang-mils-field}). To do that, it is necessary to make a change 
$t\rightarrow -t,\ \dot{r}\rightarrow -\dot{r}$ because the centrifugal
acceleration 
of the particle of monolayer has opposite direction with respect to the
centripetal acceleration produced by this field. 
After this change one gets that the components 
$F_{(1)2}^{(1)}(-t)$ are proportional to 
\begin{equation}
\left[ \dfrac{3mr_{0}}{2}-\dfrac{m^{2}e^{\frac{2|V|t}{r_{0}}}}{4p|V|r_{0}^{4}%
}\cdot \dot{r_{0}}^{3}\right] =0,
\end{equation}%
and the corresponding trajectories are: 
\begin{equation}
\dfrac{3r_{0}}{2}-\dfrac{me^{\frac{2|V|t}{r_{0}}}}{4p|V|r_{0}^{4}}\cdot \dot{%
r_{0}}^{3}=0\qquad \mbox{for arbitrary}\quad \dot{\varphi _{0}}.
\label{resonance-trajectory}
\end{equation}

Comparing the expressions 
(\ref{singular-differenc-2D-Lagrangian}) and 
(\ref{resonance-trajectory}), one can conclude that the trajectory of a
monolayer particle in the geometrical Yang-Mills field 
(\ref{Yang-mils-field}) with zero energy 
is closed to the solution of Eqs.~(\ref{geodesic-equation}, \ref%
{add1to-geodesic-equation}, \ref{add2to-geodesic-equation}). Hence, the
physical meaning of the introduced geometrical field 
(\ref{Yang-mils-field}) is that this field is a field of the
electrocapillarity forces which acts resonantly on monolayer molecules. 

Generally, the instanton-like solution is an unstable state of perturbation,
which is a transition from one to at least two stable states. %
In our case, the perturbation describes 
a first-order phase transition, 
and we will mean this transition phase as an exit of the system from the
resonance with the Yang-Mills field. 
Further we will describe a resonance detuning. 
This process is a deviation\footnote{%
Please note that here $\delta $ means a small variation. This does not
represent the covariant derivative $\delta $ used in the previous
geometrical construction of distinguished $1$-forms $\delta \dot{r}$ and $%
\delta \dot{\varphi}$.\nopagebreak} 
$\vec{\delta r}=\{\delta r(t),\delta \varphi (t)\}$ of 
a solution of Eq.~(\ref{geodesic-equation}) along the trajectory 
$\vec{r}_{0}=\{\delta r_{0},\delta \varphi _{0}\}$ (\ref%
{resonance-trajectory}) of the zero Yang-Mills field 
(\ref{Yang-mils-field}): 
\begin{equation}
\vec{r}(t)=\vec{r}_{0}(-t)+\vec{\delta r}(t).  \label{perturb-trajectory}
\end{equation}

Now, let us find a Hamiltonian of the system under above considerations,
taking into account that 
\begin{equation*}
L=\frac{1}{2}m{\dot{r}}^{2}+U(r,t)+\frac{1}{2}mr^{2}\dot{\varphi}^{2}-\frac{%
p|V|r^{5}e^{\frac{2t|V|}{r}}}{\dot{r}},
\end{equation*}%
where $U(r,t)$ is given by (\ref{u}), and the non-zero components of the
metric tensor are 
\begin{equation*}
g_{11}=\frac{1}{2}\left( m-\frac{2p|V|r^{5}e^{\frac{2tV}{r}}}{{\dot{r}}^{3}}%
\right) ,\qquad g_{22}=\frac{1}{2}mr^{2}.
\end{equation*}

Let us use instead of classical Hamiltonian the difference between $g_{11}{%
\dot{r}}^{2}+g_{22}\dot{\varphi}^{2}$ and the given Lagrangian. It follows
that our renormalized Hamiltonian reads 
\begin{equation}
\begin{array}{l}
H=g_{11}{\dot{r}}^{2}+g_{22}\dot{\varphi}^{2}-L=\medskip \\ 
=-p\left[ e^{\frac{2t|V|}{r}}\left( \dfrac{1}{45}t^{4}|V|^{4}r+\dfrac{1}{45}%
t^{3}|V|^{3}r^{2}+\dfrac{1}{30}t^{2}|V|^{2}r^{3}+\dfrac{16}{15}t|V|r^{4}-%
\dfrac{4r^{5}}{3}+\dfrac{2t^{5}|V|^{5}}{45}\right) -\right. \medskip \\ 
\left. -\dfrac{4t^{6}|V|^{6}}{45r}f\left( \dfrac{2t|V|}{r}\right) \right] +%
\dfrac{1}{2}{\dot{r}}^{2}\left( m-\dfrac{2p|V|r^{5}e^{\frac{2t|V|}{r}}}{{%
\dot{r}}^{3}}\right) -\dfrac{1}{2}m{\dot{r}}^{2}+\dfrac{p|V|r^{5}e^{\frac{%
2t|V|}{r}}}{\dot{r}}.%
\end{array}
\notag
\end{equation}

Let us consider now that the unperturbed renormalized Hamiltonian is the
Yang-Mills energy 
\begin{equation*}
H_{YM}=\frac{\dot{\varphi}^{2}\left( \frac{m^{2}e^{-\frac{2t|V|}{r}}{\dot{r}}%
^{3}}{4p|V|r^{4}}+\frac{3}{2}mr\right) ^{2}}{4m}.
\end{equation*}%
Then, the perturbation is determined as 
\begin{equation}
\delta L=-\Delta H=-(H-H_{YM})=-\left[ -U(r,t)-\frac{m\dot{\varphi}^{2}e^{-%
\frac{4t|V|}{r}}\left( m{\dot{r}}^{3}+6p|V|r^{5}e^{\frac{2t|V|}{r(t)}%
}\right) ^{2}}{64p^{2}|V|^{2}r^{8}}\right] .  \label{perturbation}
\end{equation}%
It follows that the unperturbed Lagrangian is 
\begin{equation}
L_{0}=g_{22}{\dot{\phi}}^{2}+g_{11}{\dot{r}}^{2}-H_{YM}=-\frac{\dot{\varphi}%
^{2}\left( \frac{m^{2}e^{-\frac{2t|V|}{r}}{\dot{r}}^{3}}{4p|V|r^{4}}+\frac{3%
}{2}mr\right) ^{2}}{4m}+\frac{1}{2}mr^{2}\dot{\varphi}^{2}+\frac{1}{2}{\dot{r%
}}^{2}\left( m-\frac{2p|V|r^{5}e^{\frac{2t|V|}{r}}}{{\dot{r}}^{3}}\right) .
\label{free-Lagrangian}
\end{equation}

Let us determine now an unperturbed semispray 
${G_{0}}_{(1)1}^{i}$ as a semispray satisfying the following system of
equations:
\begin{equation}
{\dfrac{d{y_{0}}_{1}^{i}}{dt}}+2{G_{0}}_{(1)1}^{(i)}\left( t,x_{0}^{k},{y_{0}%
}_{1}^{k}\right) =0,\qquad \dfrac{dx_{0}^{k}}{dt}={y_{0}}_{1}^{k},
\label{geodesic-equation1}
\end{equation}%
where 
\begin{equation}
\begin{split}
{G_{0}}_{(1)1}^{(i)}& =\left. \dfrac{g^{is}}{4}\left[ \dfrac{\partial
^{2}(L_{0}+\delta L)}{\partial x^{q}\partial y_{1}^{s}}y_{1}^{q}-\dfrac{%
\partial (L_{0}+\delta L)}{\partial x^{s}}+\frac{\partial ^{2}(L_{0}+\delta
L)}{\partial t\partial y_{1}^{s}}\right] \right\vert _{{\vec{r}}_{0},\vec{%
\dot{r}}_{0}}\equiv \\
& \equiv \dfrac{g^{is}}{4}\left[ \dfrac{\partial ^{2}L_{0}}{\partial
x^{q}\partial y_{1}^{s}}y_{1}^{q}-\dfrac{\partial L_{0}}{\partial x^{s}}+%
\frac{\partial ^{2}L_{0}}{\partial t\partial y_{1}^{s}}\right] +\mathcal{O}%
(d^{0}\delta L).
\end{split}
\label{add1to-geodesic-equation1}
\end{equation}%
A calculation of a perturbation 
$\delta G_{(1)1}^{i}$ of the semispray 
$G_{(1)1}^{i}={G_{0}}_{(1)1}^{i}+\delta G_{(1)1}^{i}$ gives 
the expression 
\begin{equation*}
\delta G_{(1)1}^{i}=\dfrac{g^{is}}{4}\left[ \dfrac{\partial ^{2}L_{0}}{%
\partial x^{q}\partial y_{1}^{s}}\delta y_{1}^{q}+\dfrac{\partial ^{2}\delta
L}{\partial x^{q}\partial y_{1}^{s}}y_{1}^{q}-\dfrac{\partial \delta L}{%
\partial x^{s}}+\frac{\partial ^{2}\delta L}{\partial t\partial y_{1}^{s}}%
\right] -\mathcal{O}(d^{0}\delta L),
\end{equation*}%
which leads to equations 
\begin{equation}
\begin{array}{l}
{\dfrac{d\delta y_{1}^{i}}{dt}}+\dfrac{g^{is}}{2}\left. \dfrac{\partial
^{2}L_{0}}{\partial x^{q}\partial y_{1}^{s}}\right\vert _{\vec{r}_{0},\vec{%
\dot{r}}_{0}}\delta y_{1}^{q}+\medskip \\ 
+\dfrac{g^{is}}{2}\left\{ \left. \left[ \dfrac{\partial ^{2}}{\partial
x^{q}\partial y_{1}^{s}}\left( \dfrac{\partial \delta L}{\partial x^{\kappa }%
}\right) \right] y_{1}^{q}\right\vert _{\vec{r}_{0},\vec{\dot{r}}_{0}}\delta
x^{\kappa }+\left. \left[ \dfrac{\partial ^{2}}{\partial x^{q}\partial
y_{1}^{s}}\left( \dfrac{\partial \delta L}{\partial y_{1}^{\kappa }}\right) %
\right] y_{1}^{q}\right\vert _{\vec{r}_{0},\vec{\dot{r}}_{0}}\delta
y_{1}^{\kappa }-\left. \left[ \dfrac{\partial }{\partial x^{s}}\dfrac{%
\partial \delta L}{\partial x^{\kappa }}\right] \right\vert _{\vec{r}_{0},%
\vec{\dot{r}}_{0}}\delta x^{\kappa }-\right. \medskip \\ 
\left. -\left. \left[ \dfrac{\partial }{\partial x^{s}}\dfrac{\partial
\delta L}{\partial y_{1}^{\kappa }}\right] \right\vert _{\vec{r}_{0},\vec{%
\dot{r}}_{0}}\delta y_{1}^{\kappa }+\left. \left[ \dfrac{\partial ^{2}}{%
\partial t\partial y_{1}^{s}}\left( \dfrac{\partial \delta L}{\partial
x^{\kappa }}\right) \right] \right\vert _{\vec{r}_{0},\vec{\dot{r}}%
_{0}}\delta x^{\kappa }+\left. \left[ \dfrac{\partial ^{2}}{\partial
t\partial y_{1}^{s}}\left( \dfrac{\partial \delta L}{\partial y_{1}^{\kappa }%
}\right) \right] \right\vert _{\vec{r}_{0},\vec{\dot{r}}_{0}}\delta
y_{1}^{\kappa }\right\} =0,%
\end{array}
\label{geodesic-equation2-0}
\end{equation}%
\begin{equation}
\dfrac{d\delta x^{k}}{dt}=\delta y_{1}^{k}.  \label{geodesic-equation2}
\end{equation}%
Since 
due to condition (\ref{singular-differenc-2D-Lagrangian}) $\dot{\varphi}%
_{0}=\epsilon $ is very small: $\dot{\varphi}_{0}=\epsilon \ll 1$ and 
$\delta L$ depends quadratically on 
$\dot{\varphi}$, then the variables in 
(\ref{geodesic-equation2-0}, \ref{geodesic-equation2}) are separated: 
\begin{equation*}
{\frac{d\delta y_{1}^{2}}{dt}}+\dfrac{g^{22}}{2}\left. \left\{ \left[ \dfrac{%
\partial ^{2}}{\partial x^{1}\partial y_{1}^{2}}\left( \frac{\partial \delta
L}{\partial y_{1}^{2}}\right) \right] y_{1}^{1}+\left[ \frac{\partial ^{2}}{%
\partial t\partial y_{1}^{2}}\left( \frac{\partial \delta L}{\partial
y_{1}^{2}}\right) \right] \right\} \right\vert _{r_{0},\dot{r}_{0}}\delta
y_{1}^{2}=0,
\end{equation*}%
\begin{equation*}
\begin{array}{l}
{\dfrac{d\delta y_{1}^{1}}{dt}}-\dfrac{g^{11}}{2}\left. \left[ \dfrac{%
\partial }{\partial x^{1}}\dfrac{\partial \delta L}{\partial x^{1}}\right]
\right\vert _{r_{0},{\dot{r}}_{0}}\delta x^{1}+\dfrac{g^{11}}{2}\left.
\left\{ \left[ \dfrac{\partial ^{2}}{\partial x^{q}\partial y_{1}^{1}}\left( 
\dfrac{\partial \delta L}{\partial y_{1}^{2}}\right) \right] y_{1}^{q}+\left[
\dfrac{\partial ^{2}}{\partial t\partial y_{1}^{1}}\left( \dfrac{\partial
\delta L}{\partial y_{1}^{2}}\right) \right] \right\} \right\vert
_{r_{0},\delta x^{2},{\dot{r}}_{0},\delta y_{1}^{2}}\delta y_{1}^{2}+\medskip
\\ 
+\dfrac{g^{11}}{2}\left. \dfrac{\partial ^{2}L_{0}}{\partial x^{1}\partial
y_{1}^{1}}\right\vert _{\vec{r}_{0},\vec{\dot{r}}_{0}}\delta
y_{1}^{1}=0,\qquad \dfrac{d\delta x^{k}}{dt}=\delta y_{1}^{k}.%
\end{array}%
\end{equation*}

Let us examine the case 
$\dot{\varphi}_{0}=\epsilon =0$. In this case, a full separation 
of variables takes place: 
\begin{eqnarray}
{\frac{d\delta y_{1}^{2}}{dt}}+\dfrac{g^{22}}{2}\left. \left\{ \left[ \dfrac{%
\partial ^{2}}{\partial x^{1}\partial y_{1}^{2}}\left( \frac{\partial \delta
L}{\partial y_{1}^{2}}\right) \right] y_{1}^{1}+\left[ \frac{\partial ^{2}}{%
\partial t\partial y_{1}^{2}}\left( \frac{\partial \delta L}{\partial
y_{1}^{2}}\right) \right] \right\} \right\vert _{r_{0},\dot{r}_{0}}\delta
y_{1}^{2} &=&0,  \label{geodesic-equation-final-a} \\
{\frac{d\delta y_{1}^{1}}{dt}}-\dfrac{g^{11}}{2}\left. \left[ \dfrac{%
\partial }{\partial x^{1}}\dfrac{\partial \delta L}{\partial x^{1}}\right]
\right\vert _{r_{0},{\dot{r}}_{0}}\delta x^{1}+\dfrac{g^{11}}{2}\left. 
\dfrac{\partial ^{2}L_{0}}{\partial x^{1}\partial y_{1}^{1}}\right\vert _{%
\vec{r}_{0},\vec{\dot{r}}_{0}}\delta y_{1}^{1} &=&0,
\label{geodesic-equation-final-b} \\
\frac{d\delta x^{k}}{dt} &=&\delta y_{1}^{k}.
\label{geodesic-equation-final-c}
\end{eqnarray}

Substituting 
(\ref{perturbation}) and (\ref{free-Lagrangian}) into (\ref%
{geodesic-equation-final-a} -- \ref{geodesic-equation-final-c}) and making
some simple transformations, one gets 
the equations for deviation in explicit form: 
\begin{eqnarray}
\left( 2p|V|r_{0}^{5}e^{\frac{2t|V|}{r_{0}}}-m\dot{r}_{0}^{3}\right) {\frac{%
d^{2}}{dt^{2}}}\delta r(t)+p|V|r_{0}^{3}e^{\frac{2t|V|}{r_{0}}}(2t|V|-5r_{0})%
\dot{r}_{0}{\frac{d}{dt}}\delta r(t)+\dot{r}_{0}^{3}\ddot{U}(t,r_{0})\ {%
\delta r} &=&0,  \label{geodesic-equation-final1} \\
{\frac{d^{2}}{dt^{2}}}\delta \varphi (t)+\frac{\dot{r}_{0}\left( m\dot{r}%
_{0}^{3}+6p|V|r_{0}^{5}e^{\frac{2t|V|}{r_{0}}}\right) \left[ m(t|V|-2r_{0})%
\dot{r}_{0}^{3}+3p|V|r_{0}^{6}e^{\frac{2t|V|}{r_{0}}}\right] }{%
8p^{2}|V|^{2}r_{0}^{12}\exp \{{4t|V|}/{r_{0}}\}}{\frac{d}{dt}}\delta \varphi
(t) &=&0.  \label{geodesic-equation-final2}
\end{eqnarray}%
According to equation 
(\ref{resonance-trajectory}) the resonant solution 
$r_{0}(-t)$ satisfies the equation
\begin{equation}
m\dot{r}_{0}^{3}+6p|V|r_{0}^{5}e^{\frac{2t|V|}{r_{0}}}=0.
\label{resonance-trajectory1}
\end{equation}%
The substitution of 
(\ref{resonance-trajectory1}) into (\ref{geodesic-equation-final2}) gives
the equation 
\begin{equation*}
{\frac{d^{2}}{dt^{2}}}\delta \varphi (t)=0,
\end{equation*}%
that can be easy solved as 
\begin{equation}
\delta \varphi =C_{1}+C_{2}t.  \label{solution-delta-phi}
\end{equation}

Further a simulation will be performed. 
Let us choose the following particular values for our parameters: 
$p=10,\ |V|=1000,\ m=1$ and assume that the constants in 
(\ref{solution-delta-phi}) are equal to 
$C_{1}=0,\ C_{2}=1$. Using the physical condition 
$r_{0}\rightarrow R_{0}-|V|t\rightarrow 0$ for large time $t$, the resonant
solution 
$r_{0}(-t)$ has been found approximately by a transformation of the equation 
(\ref{resonance-trajectory}) for large time 
$t$ to the form 
\begin{equation}
m\dot{r}_{0}^{3}+6p|V|r_{0}^{5}e^{{2|V|t}/{(R_{0}-|V|t)}}=0,\qquad 
\mbox{for
large }\ t.  \label{resonance-trajectory2}
\end{equation}%
The solution of equation 
(\ref{resonance-trajectory2}) has the form 
\begin{equation}
\begin{split}
r_{0}& ={27\sqrt{m}{R_{0}}ve^{\frac{t|V|}{t|V|-{R_{0}}}}}\left\{ -4\sqrt[3]{6%
}\sqrt[3]{p}{R_{0}}^{5/3}e^{-\frac{2{R_{0}}/3}{{R_{0}}-t|V|}}\left[ f\left( 
\frac{2}{3}\right) -f\left( \frac{2{R_{0}}/3}{{R_{0}}-t|V|}\right) \right]
\right. \\
& \left. +e^{-\frac{2t|V|/3}{{R_{0}}-t|V|}}\left( 9\sqrt[3]{m}|V|^{2/3}+6%
\sqrt[3]{6}\sqrt[3]{p}{R_{0}}^{5/3}\right) -6\sqrt[3]{6}\sqrt[3]{p}{R_{0}}%
^{2/3}({R_{0}}-t|V|)\right\} ^{-3/2}\qquad \mbox{for large }\ t.
\end{split}
\label{resonance-trajectory-solution}
\end{equation}

The simulation of the case with 
$\epsilon =0$ results 
from Figs.~2--4.

Finally, we consider it is important to note that the obtained results are
correlated with a behavior of the surface tension 
$P$ of the monolayer. Moreover, the dependence 
$r(t)$ shown in Fig. 4 behaves as an isotherm 
$P(S)$. It has a plateau corresponding to the first-order phase transition
and for large 
$t$ to a fold that corresponds to the collapse of monolayer.

\textbf{Acknowledgements. }The present work was developed under the auspices
of Grant 1196/2012 - BRFFR-RA F12RA-002, within the cooperation framework
between Romanian Academy and Belarusian Republican Foundation for
Fundamental Research.

A version of this paper was presented at VIII-th International Conference
\textquotedblleft \textit{Finsler Extensions of Relativity Theory}%
\textquotedblleft , June 25 - July 1, 2012, Moscow-Fryazino, Russia.


\end{document}